\newtheorem{thm}{Theorem}
\newtheorem{lem}[thm]{Lemma}
\newcommand{\eqdef}{\stackrel{\text{def}}{=}}
\newcommand{\dG}{\mathbb{G}}
\newcommand{\IN}{\mathbb{N}}
\newcommand{\lcm}{{\mathrm{lcm}}}
\newcommand{\rem}[2]{ \langle {#1} \rangle_{\!#2}}
\newcolumntype{s}{>{\hsize=.5\hsize}X}
\newcommand{\Cd}{{\mathcal{C}}}
\newcommand{\Cbi}{{\mathcal{C}}_{\mathrm{sym}}}
\newcommand{\Cst}{{\mathcal{C}}^*}
\newcommand{\Cstbi}{{\mathcal{C}}^*_{\mathrm{sym}}}
\newcommand{\mbset}{ \{ \!\! \{ }
\newcommand{\meset}{ \} \!\! \} }
\newcommand{\bigmbset}{\big \{ \!\! \big\{ }
\newcommand{\bigmeset}{\big \} \!\! \big\}}
\newcommand{\ignore}[1]{}
\title{Clock Synchronization Is Almost Impossible with Bounded Memory}
\date{}
\author{Bernadette Charron-Bost\thanks{Corresponding author: \texttt{charron@di.ens.fr}}}
\author{Louis Penet de Monterno}
\affil[1]{DI ENS, CNRS, \'Ecole Normale Sup\'erieure, 75230 Paris, France}
\begin{document}
\maketitle

\begin{abstract}
We study the problem of clock synchronization in a  networked system with arbitrary starts 
	for all nodes.
We consider a synchronous network of $n$ nodes, where each node has a local clock that is an integer counter.
Eventually, clocks must be all equal and increase by  one in each round modulo some period $P$.   
The purpose of this paper is to study whether  clock synchronization can be achieved with bounded memory,
	that is every node maintains a number of states that does not depend on the network size.
In particular, we are interested in clock synchronization algorithms which work in dynamic networks, i.e.,
	tolerate that communication links continuously fail and come-up.
	
We first focus on  self-stabilizing solutions for clock synchronization, and  prove that there is no such algorithm 
	that is bounded memory, even in the case of static networks.
More precisely, we show a lower bound of $n+1$ states at each node required to achieve 
	clock synchronization  in static strongly connected networks with at most $n$ nodes, 
	and derive a lower bound of $n-2$ rounds on synchronization time, in the worst case.
We then prove that, when the self-stabilizing requirement is removed, the impossibility of clock synchronization 
	with bounded memory still holds in the dynamic setting: every solution for the clock synchronization problem
	in dynamic networks with at most $n$ nodes requires each node to have $\Omega(\log n)$ states.
\end{abstract}

\newpage
\section{Introduction}

In this paper, we study the following \emph{clock synchronization problem}:
In a system of $n$ agents receiving common regular pulses, where each agent is equipped with a 
	discrete clock, all these local clocks must eventually increase by one at each pulse and be equal, 
	modulo some fixed integer~$P$, despite arbitrary activation times or arbitrary initializations.
This problem, also referred to as \emph{synchronous counting}~\cite{DolevHJKLRSW:jcss:2016},
	corresponds to synchronization in \emph{phase}, as opposed to  synchronization in \emph{frequency} 	
	in non-synchronous systems (e.g., see~\cite{ST87,ER95,FL04,LLW10}).

Clock synchronization is a fundamental problem, both in engineered and natural systems:
For engineered systems, it is often required that processors have a common notion of time  
	(e.g., see the universal self-stabilizing algorithm in~\cite{BV:dc:02} or the very many distributed 
	algorithms structured into synchronized \emph{phases},  like the 2PC and 3PC protocols, or the 
	consensus protocols in~\cite{DLS:jacm:88}).
Clock synchronization also corresponds to an ubiquitous phenomenon in the natural world and finds 
	numerous applications in physics and biology, e.g., the Kuramoto model for the synchronization 
	of coupled oscillators~\cite{Str:phys:00}, synchronous flashing fireflies, 
	collective synchronization of pancreatic beta cells~\cite{Jad:cacm:12}.  	 
In such systems with massive amounts of cheap, bulk-produced hardware, or swarms of small mobile 
	agents that are tightly constrained by the systems they run on, the resources available at each 
	agent may be severely limited. 	 
In this context, \emph{bounded memory algorithms}, i.e.,  algorithms where agents  have a  finite set 
	of states that is independent of the number $n$ of agents, are highly desirable.
For example, this is typically the case for the agents  in the \emph{Population Protocol} 
	model~\cite{AngluinAER:dc:2007}.
In bounded memory algorithms, agents de facto send only $O(1)$-bit messages.
Moreover, since the  number of agents is arbitrary, agents may have no identifier.
	
The main goal of this work is to study whether the bounded memory limitation is crippling or not for 
	clock synchronization. 
For that, we consider a classical abstract model that captures the common requirements of the various 
	settings cited above.
A networked system is composed of $n$ identical agents that have \emph{deterministic} behaviors.
They  operate in synchronous rounds and communicate by simple broadcast. 
In order to take into account link failure and link creation, communication links may vary over time.
The basic connectivity assumption is of a finite diameter,  i.e., any pair of agents can communicate, 
	possibly indirectly, over a period of time that is uniformly bounded throughout the execution.
Agents may use only local informations; in particular they  are unaware of the structure of the network 
	and of  its diameter. 

\vspace{0.2cm}
\noindent {\bf Contribution.} 
We answer the above question of whether the bounded memory condition is crippling or not for 
	the clock synchronization problem in the affirmative in two cases.
We first  consider \emph{self-stabilizing} solutions, i.e., tolerating arbitrary initializations.
We show that there  exists no  such solutions for the clock synchronization problem using bounded memory, 	
	 even in the case of static networks with permanent communication links.
More precisely, we show that $n+1$ states at each node are required to achieve 
	clock synchronization  in the class of static strongly connected networks with at most $n$ nodes.
We also derive a lower bound of $n-2$ rounds on synchronization time, in the worst case.
These lower bounds demonstrate that the self-stabilizing clock synchronization algorithms proposed 
	in~\cite{BPV:algorithmica:08,FelKS:spaa:20,CBPM:opodis:22} when an upper bound on the number 
	of agents is available,  are all asymptotically optimal with regards to both runtime and memory
	requirements.

Our  second result extends this impossibility result to non-self-stabilizing solutions:
	we show that clock synchronization with bounded memory is impossible in the dynamic setting.
Indeed, we prove that clock synchronization in dynamic networks of size  at most $n$ 
	 requires each agent to have $\Omega(\log n)$ states. 

\vspace{0.2cm}
\noindent {\bf Related works}.
Clock synchronization has been extensively studied in different communication models, under different 
	assumptions, and with multiple variants in the problem specification.
The pioneer papers on this question~\cite{ADG:ppl:91,HG:ipl:95,BPV:algorithmica:08,ADDP19}, which all
	proposed self-stabilizing solutions for static networks,  assume that a bound on the network size 
	or on the diameter is available.
Only the synchronization algorithms in~\cite{BV:dc:02} and~\cite{CBPM:opodis:22} for static and dynamic
	networks, respectively,  dispenses with this assumption.
Both algorithms are self-stabilizing and use \emph{finite but unbounded} memory, 
	in the sense that the number of states used by each agent  in any given execution is finite but is 
	not uniformly bounded with respect to the number of agents.

There are also numerous works on clock synchronization in the context of \emph{Byzantine} failures
	 (e.g., see~\cite{Dol:97,DW:jacm:04, LenzenRS:podc:2015,LenzenRS:siamcomp:2017}). 
The goal is to tolerate one third of Byzantine agents, which obviously increases runtime and memory 
	requirements.
The clock synchronization algorithms deployed for tolerating Byzantine failures typically assume a  fully 
	connected network and port labellings, which requires each agent to use at least $\log n$ bits.

Synchronization has also been studied in the model of \emph{Population Protocols}~\cite{AngluinAER:dc:2007}.
As in our model, there is a finite but unbounded number of  anonymous agents, with bounded memory.
The difference lies in the interaction mode between agents: this is an asynchronous model, 
	where a unique random pair of  agents interact in each step.
The synchronization task is totally different as it consists in simulating synchronous rounds
	(synchronization in frequency).
Moreover,  only stabilization with probability one or with high probability is required, 
	while our algorithms have to stabilize in \emph{all} executions.

The same probabilistic weakening of problem specification is considered for other  probabilistic 
	communication models.
In particular, in the \emph{PULL} model~\cite{KDG:focs:03}, Bastide et al.~\cite{BGS:soda:21} proposed a 
	self-stabilizing algorithm that  synchronizes $P$-periodic clocks in $O(\log n)$ rounds with high 
	probability, in the case  $P$ is a power of 2.
This demonstrates how randomness in communications can help to circumvent our impossibility results 
	and lower bounds for clock synchronization. 

The clock synchronization problem has also been studied in the round-based 
	\emph{Beeping model}~\cite{CK:disc:10},
	which is a very restrictive communication model that relies on carrier 
	sensing: in each round, an agent can either listen  or transmit  to all its neighbors one beep.
In this model,  for synchronizing $P$-periodic clocks, Feldmann et al.~\cite{FelKS:spaa:20} proposed a 
	(non self-stabilizing) solution with $4P$ state agents.
This contrasts but does not contradict our second impossibility result since their algorithm assumes
	static symmetric networks (that is, static bidirectional communication links) and  the restricted model 
	of \emph{diffusive starts}, where a passive node  becomes active upon the receipt of a message from an 
	active incoming neighbor.
In this context, they also proved lower bounds on the number of states and the runtime of self-stabilizing
	solutions which are of the same order than ours, namely
	 $n$ (instead of $n+1$) states at each node are required and  synchronization takes at least $n$ 
	 (instead of $n-2$) rounds in the class of static symmetric networks with at most $n$ nodes.	
Our lower bounds hence show that relaxing communication constraints -- in particular, allowing to send 
	more than one bit per message --
	does not  help to reduce memory requirements and  runtime for clock synchronization.

\section{Preliminaries}\label{sec:model}
 
\subsection{The computing model} \label{sec:alg}

We consider a networked system with a fixed and finite set of nodes denoted $0,\cdots,n-1$
	that may be \emph{passive} or \emph{active}:
A passive node sends nothing, receives nothing, and does not change state,
	as opposed to an active node which can send messages, receives messages, 
	and update its state.
An active node remains active forever.
All nodes are initially passive and eventually become active.

Computation proceeds in  \emph{synchronized rounds}, which are communication closed 
	in the sense that no message received in round $t$ is sent in a round different from~$t$\footnote{%
	Synchronized rounds can be implemented in totally asynchronous systems, i.e.,  when there is 
	no bound on message delays and relative process speed~\cite{Awe85}, and even in the presence of 
	non-malicious failures~\cite{CBS09}.}.
In each round, an active node successively sends messages, receives some messages, and 
	then undergoes an internal transition to a new state.
The period from round $t$ to round~$t'$, $t'\geq t$, is denoted $[t,t']$.
Communications that occur at round $t$, i.e., which nodes receive messages from which nodes at round~$t$, 
	correspond to the directed graph 
	$\dG(t)=([ n ] ,E_t)$ where $[ n ]  = \{0, \dots, n-1\}$ and $(i,j)\in E_t$ if and only if the agent~$j$
	 receives the message sent by $i$  to~$j$ at round~$t$.
A passive node at round~$t$ thus corresponds to an isolated node in $\dG(t)$, with an in-degree and 
	an out-degree both equal to zero.
We assume a self-loop at each active node in~$\dG(t)$, since an agent can communicate with itself instantaneously.
The \emph{network} is thus modeled by the \emph{dynamic graph}~$\dG$,
	i.e., the infinite sequence of directed graphs~$\dG=\left (\dG(t) \right )_{t \geq 1}$,
	with the same set of nodes.
Observe that each node~$i$ has an in-degree and an out-degree in $\dG$ both equal to zero up to 
	a certain index~$t_i$, and then has a self-loop beyond that  index~$t_i$, which represents the 
	$i$'s activation time.
	
Each node may possess more or less information about the network it belongs to.
 The knowledge of certain information about  the network corresponds to 
 	restrict to a non-empty subset of networks, called a \emph{network class}.
As an example, the network class where the number of nodes is known to be $n$
	is captured by the set of dynamic graphs with $n$ nodes.
Similarly,  the network class where an upper bound $N$ on the number of nodes is known 
	is the set of dynamic graphs with at most $N$ nodes.
The class of \emph{symmetric networks} corresponds to the set of dynamic graphs 
	with bidirectional edges -- that is, at each round~$t$, $(i,j) \in E_t$ if and only if $(j,i) \in E_t$.
		
An \emph{algorithm}~\textsc{alg} is given by a non-empty set $\mathcal{Q}$ of local states,
	a non-empty subset $\mathcal{Q}_0\subseteq \mathcal{Q}$ of initial states,
	a non-empty set of messages~$\mathcal{M}$, a sending function, and a transition function.
The latter function determines the state after a transition by an active node:
	the new state is computed on the basis of the current state
	and the messages that have been just received.
That corresponds to a transition function~$\tau : \mathcal{Q} \times \mathcal{M}^{\oplus} \rightarrow \mathcal{Q}$,
	where $\mathcal{M}^{\oplus}$ denotes the set of non-empty finite multi-sets over the set~$\mathcal{M}$.
The messages to be sent by an active  node just depend on its current state,
	and thus correspond to  a sending function $\sigma : \mathcal{Q} \to \mathcal{M}$.

An \emph{execution of} the algorithm~\textsc{alg} in the dynamic graph~$\dG$
	proceeds as follows:
At the beginning of round $t_i$, the state of the node~$i$ belongs to the set of initial states~$\mathcal{Q}_0$.
In each round~$t = t_i, t_i +1,\dots$, the node~$i$ applies the sending function~$\sigma$
	to generate the message to be sent, then it receives the messages sent by
	its incoming neighbors in the directed graph~$\dG(t)$, and finally applies the transition function $\tau$
	to its current state and the multi-set of messages it has just received to go to a next state.
Algorithms are deterministic, and thus any execution of an algorithm is defined by the initial state of the network
	and  the dynamic graph $\dG$.	
In the rest of the paper, we adopt the following notation: given an execution of~\textsc{alg}, the state of 
	a node~$i$ at the end of round~$t$ is denoted $s_i(t)$, and $s_i(0)$ is its initial state .
Similarly, the value at the end of round~$t$ of any variable $x_i$, local to the node~$i$, is denoted $x_i(t)$, 
	and $x_i(0)$ is the initial value of $x_i$.	
It follows that  if $t_i$ denotes the round in which the node $i$ becomes active, then $s_i(t)=s_i(0)$ for every 
	$t \in \{0,\cdots,  t_i-1\}$.

\subsection{The synchronization problem}
Let  ${\cal C}$ be a network class, and let  \textsc{alg} be an algorithm where 
	each node~$i$ maintains a local counter, called \emph{local clock},~$C_i$.
We fix an integer $P>1$ and say that the algorithm \textsc{alg} \emph{solves the mod~$P$-synchronization 
	problem in the network class} ${\cal C}$ if, for every execution of \textsc{alg} in a network in 
	${\cal C}$, the corresponding sequences of integers $(C_i(t))_{i \in [n], t \in \mathbb{N}}$ satisfy
\begin{equation*}
    \exists c \in \mathbb{Z}, 
    \exists t_0 \in \mathbb{N}, \forall t \geq t_0, \forall i \in [n],~C_i(t) \equiv_P t + c.
\end{equation*}
In other words, from a certain round, all clocks are congruent modulo $P$ and 
	are incremented by one modulo $P$ in each round.
Observe that the nodes are only required to synchronize their clocks, without awareness that
	synchronization is reached (no \emph{firing}).
The issue addressed in this paper is the existence of a synchronization algorithm using bounded memory,
	that is,  with a finite set ${\cal Q}$ of states and  working in network classes where 
	 the number of nodes is arbitrary.

\subsection{Asynchronous starts and self-stabilization}
	
Much of the literature on asynchronous starts considers the more restrictive model of \emph{diffusive starts}:
	a passive node is able to hear of an incoming neighbor, and becomes active upon the receipt of a message\footnote{%
The model of diffusive starts can be interestingly compared with the model with 
	\emph{heartbeats}~\cite{BurL:acr:1987,CBM:tcs:19}
	where, on the contrary, passive nodes can receive nothing, but send null messages.}.
In other words, an active node wakes up all its outgoing neighbors.

Actually, the model of asynchronous starts corresponds to arbitrary initializations of the local clocks.
The synchronization problem can be extended to the model where the whole local state of each
	agent is arbitrary.
In this case, a synchronization algorithm  is said to be  \emph{self-stabilizing}.
For such algorithms, there is clearly no restriction in assuming  that nodes all start
	in the first round: every algorithm that synchronizes clocks despite arbitrary 
	initializations of the nodes when they are all active in the first round still works with asynchronous 
	starts, whether diffusive or not.
We thus obtain a hierarchy of three models for the synchronization problem, namely the model of diffusive starts, 
	the model of asynchronous starts, and the self-stabilization model (with synchronous starts).
The point of  this paper is to study whether synchronization is possible with bounded memory in 
	each of  these models.

\section{Self-stabilizing synchronization}

%\subsection{Self-stabilization}
%
%The point of the notion of self-stabilization is to tolerate \emph{transient faults}.
%A \emph{transient fault} is an arbitrary modification of the state of a system at a certain instant.
%As an example, a possible source of transient faults in real-world systems is cosmic rays, that can randomly flip a bit in the memory of a digital device.
%A self-stabilizing algorithm is an algorithm that is able to recover from any transient fault in finite time.
%In other words, in any execution,
%    a certain correctness property is eventually achieved regardless of the initial state of the nodes.
%More specifically, an algorithm \textsc{alg} is said to be \emph{self-stabilizing} if its set of initial states $\mathcal{I}$ is equal to its set of states $\mathcal{Q}$.

\subsection{Our result}
	
In this section, we prove that there is no self-stabilizing synchronization algorithm with bounded memory 
	for  the class  of networks with a finite \emph{dynamic diameter} (see Section~\ref{sec:asynch_starts}).
Actually, our proof only involves static unidirectional rings, which shows  the impossibility result  holds for the 
	restricted class $\Cst$ of static networks that are strongly connected, and even holds
	 in the subclasses of networks  with bounded  in- or out-degrees.
Moreover, a refinement of the argument works with bidirectional rings, yielding the same impossibility result 
	when limited to the subclass of networks $\Cstbi$  with symmetric communication links.
For the sake of clarity, we start with the proof for $\Cst$, and then briefly adapt the argument for $\Cstbi$.

\begin{thm} \label{thm:no_self_stab}
For any period $P > 1$,
    no self-stabilizing algorithm solves the mod~$P$-synchronization problem using bounded memory 
    in the class of static networks that are strongly connected.
\end{thm}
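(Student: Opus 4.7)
I would assume for contradiction that a self-stabilizing algorithm $\textsc{alg}$ uses a finite set $\mathcal{Q}$ of $k$ states, with $k$ independent of $n$, and solves the mod-$P$-synchronization problem on every static strongly connected network. The contradiction would be derived on \emph{unidirectional rings} $R_n$, the simplest instances of $\Cst$.

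The first step is a structural analysis of \emph{uniform executions}. Because $R_n$ is invariant under cyclic rotation and $\textsc{alg}$ is deterministic, the execution starting with every node in a common state $q$ remains uniform forever and reduces to the single-node iteration $u^{q}_{t+1} = \tau\!\bigl(u^{q}_{t},\, \mbset \sigma(u^{q}_{t}),\, \sigma(u^{q}_{t}) \meset\bigr)$, which is independent of $n$. By finiteness of $\mathcal{Q}$ this iteration is eventually periodic; the synchronization hypothesis applied to this particular execution forces its eventual period $\ell_q$ to be a positive multiple of $P$ with $P \leq \ell_q \leq k$, and it equips each state $q'$ lying on the limit cycle with a canonical \emph{phase} $c_{q'} \in \mathbb{Z}/P\mathbb{Z}$ satisfying $C(u^{q'}_{t}) \equiv_{P} t + c_{q'}$ for $t$ large enough. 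As $t$ runs through one period of the cycle, $c_{u^{q}_{t}}$ traverses every residue modulo~$P$, so I can pick two cycle-states $q, q'$ with $c_q \not\equiv_P c_{q'}$.

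The second step exploits the \emph{one-way causality} proper to unidirectional rings: the state $s_i(t)$ at time $t < n$ depends only on the initial states of the $t+1$ consecutive predecessors $i - t, \dots, i$ (indices mod $n$). Consequently, if $m$ consecutive nodes of $R_n$ start in a common cycle-state $q$, then at round $t < m$ every node lying sufficiently far inside the block from its rear boundary is in the uniform state $u^{q}_t$ and carries clock value $t + c_q \pmod P$. The plan is to use this on $R_{2m}$ with $m \gg k$, placing $m$ consecutive nodes in state $q$ followed by $m$ consecutive nodes in state $q'$: the causality lemma produces, throughout the window $[1, m)$, deep nodes in the two blocks whose clocks differ by the fixed nonzero residue $c_q - c_{q'}\pmod P$, which is incompatible with mod-$P$-synchronization during that window.

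The main obstacle is that the synchronization hypothesis only guarantees clock equality from some \emph{a priori} unbounded time $t_0$ that, for the mixed initial configuration, could exceed $m$. To circumvent it I would not reason on a single execution but on the ultimately periodic orbit of the mixed execution inside the finite global state space $\mathcal{Q}^{2m}$, on which synchronization is forced to hold, and then combine one further pigeonhole argument --- exploiting that $k$ is fixed while $m$ may be chosen arbitrarily large --- with the causality lemma so as to reveal, inside the synchronized cycle, configurations still containing two long enough monochromatic sub-blocks in states with distinct phases. Making this \emph{block-persistence} step rigorous, by tying together the bounded state count, the one-way causality of the ring, and the periodicity of the stabilized regime, is the technical heart of the proof.
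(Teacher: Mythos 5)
Your steps on uniform executions (phases of cycle states) and one-way causality on the unidirectional ring are sound, but they only control the execution on a finite window $t<m$, and you correctly identify that this is not enough: the specification constrains only the tail $t\geq t_0$, with $t_0$ allowed to depend on the execution. The step you defer --- ``block persistence'', i.e.\ that the eventual periodic orbit of the mixed two-block execution in $\mathcal{Q}^{2m}$ still contains two long monochromatic sub-blocks with distinct phases --- is not a technicality but the entire difficulty, and as stated there is no reason it should hold: the pigeonhole on $\mathcal{Q}^{2m}$ gives ultimate periodicity of the global orbit but says nothing about the structure of the configurations on the limit cycle, and if the algorithm were correct those configurations would have all clocks congruent, so they could not contain two deep uniform blocks with distinct phases. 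Proving that they nevertheless do is essentially equivalent to proving the algorithm incorrect, so the proposal as it stands is circular at its crux. What your window argument does yield is a lower bound of order $m$ on the stabilization time on rings of size $2m$ --- a statement compatible with correct algorithms whose stabilization time grows with $n$ (e.g.\ the unbounded-memory SAP algorithm), so it cannot by itself separate bounded from unbounded memory.

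The paper closes exactly this hole by choosing an initial configuration whose \emph{entire infinite} execution is known, not just a prefix: it defines the second-order recurrence $q^{r+2}=\tau\bigl(q^{r+1},\mbset\sigma(q^{r+1}),\sigma(q^{r})\meset\bigr)$ (a node seeing itself and its ring predecessor), uses finiteness of $\mathcal{Q}^2$ to get ultimate periodicity with period $L$ from index $\ell-1$, and runs the algorithm on the directed ring of size exactly $L$ with node $i$ initialized to $q^{i+\ell}$. The periodicity makes the wrap-around consistent, so by induction $s_i(t)=q^{i+\ell+t}$ for \emph{all} $t$; in particular $s_1(t)=s_0(t+1)$ holds forever, including after the (arbitrary) stabilization time, and then $C_0(t)+1\equiv_P C_1(t)\equiv_P C_0(t)$ gives the contradiction. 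In other words, the initial configuration is itself a traveling wave lying on a global periodic orbit, which is the mechanism your proposal is missing; without such a construction (or an argument of comparable strength tying the post-stabilization regime to the controlled states), the proof is incomplete.
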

\begin{proof}
The proof is by contradiction: assume there is a self-stabilizing algorithm \textsc{alg}, with a finite set of states 
	${\cal Q}$,  that achieves  mod~$P$-synchronization  in any strongly connected static network.
For an arbitrary pair of states $q^0, q^1 \in \mathcal{Q}$, we inductively define the
	sequence of states $(q^r)_{r \in \mathbb{N}}$ as:
	\begin{equation}\label{eq:defrecq}
        q^{r+2}  \eqdef \tau \left ( q^{r+1},   \mbset   \sigma(q^{r+1}), \sigma(q^{r})  \meset  \right ).
	\end{equation}
The set ${\cal Q}^2$ is finite, since ${\cal Q}$ is  finite, and  this second-order sequence is 
	ultimately periodic (cf. Figure~\ref{fig:mem}):
	there exist two positive integers $\ell $ and $L$ such that 
	$$ \forall r \geq \ell -1, \ \ q^{r+L} = q^r .$$

Let us consider the execution of \textsc{alg} in the directed ring of size $L$
	with a self-loop at each node, and where every node $i \in [L]$ starts in the state~$s_i(0) \eqdef q^{i+\ell}$
	(cf. Figure~\ref{fig:memtime}).
We then prove, by induction on $t \in \mathbb{N}$,  that 
	\begin{equation} \label{eq:state_sys}
	\forall i \in [L],\ \ s_i(t) = q^{i+\ell+t}.
	\end{equation}
\begin{enumerate}
	\item The base case $t=0$ follows from the definition of the initial states.
        \item Assume that Eq.~(\ref{eq:state_sys}) holds for a certain integer $t\in  \mathbb{N}$. 
        In round $t+1$, each node $i$ receives a message from itself and from its predecessor in the ring, 
		namely  $L-1$ for the node $0$ and $i-1$ otherwise.
	It follows that for every node $i\neq 0$,
		$$  s_i(t+1)  = \tau \left ( q^{i+\ell+t},   \mbset   \sigma(q^{i+\ell+t}), \sigma(q^{i-1+\ell+t}) \meset  \right ) $$
		and
		$$ s_0(t+1) = \tau \left ( q^{\ell+t},   \mbset  \sigma(q^{\ell+t}), \sigma(q^{L-1+\ell+t}) \meset  \right ) 
                	= \tau \left ( q^{\ell+t},   \mbset   \sigma(q^{\ell+t}), \sigma(q^{\ell-1+t }) \meset  \right ) $$
	since $t$ is non-negative and  $L$ is a period of the sequence $(q^r)_{r \geq \ell -1}$.
	In both cases, the inductive definition of the sequence $(q^r)_{r \in \mathbb{N}}$ in Eq.~(\ref{eq:defrecq}) yields
         $ s_{i}(t+1)  = q^{\ell+i+t+1}  $, as required.
\end{enumerate}
It follows that in each round $t$,  we have
    \begin{equation}\label{eq:ringuni}
       s_1(t) = q^{\ell+t+1} = s_0(t+1).
    \end{equation}
Since \textsc{alg} achieves mod~$P$-synchronization, there exists some round~$t_s$ such that 
	for all rounds $t \geq t_s$, it holds that
	\begin{equation*}
     C_1( t ) \equiv_P C_0(t) \  \mbox{ and } \ C_0(t+1)  \equiv_P  C_0(t) + 1,
        \end{equation*}
	which, with Eq.~(\ref{eq:ringuni}), yields
	$$  C_0(t) + 1   \equiv_P C_0(t) ,$$
	a contradiction.
\end{proof}
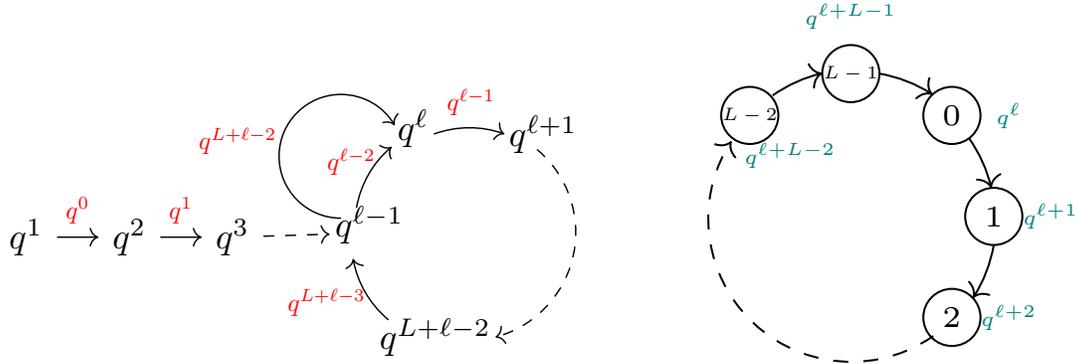
\begin{figure}
\centering
\begin{subfigure}[b]{0.5\textwidth}
\centering
\resizebox{\linewidth}{!}{
\begin{tikzpicture}[font=\small]
    \node (n1) {$q^1$};
    \node (n2) [right of=n1] {$q^2$};
    \node (n3) [right of=n2] {$q^3$};
    \draw[->] (n1) -- node[midway, above]{\scalebox{0.7}{$\color{red} q^0$}} (n2);
    \draw[->] (n2) -- node[midway, above]{\scalebox{0.7}{$\color{red} q^1$}} (n3);

    \begin{scope}[shift={(3.6,0.9)}]
        \node (n4) [label={[shift={(0.15, -0.3)}]$q^\ell$}]{};
        \node (n5) [rotate=60, left of=n4, label={[shift={(0.81, -0.3)}]$q^{\ell-1}$}] {};
        \node (n6) [rotate=120, left of=n5, label={[shift={(1.15, 0.25)}]$q^{L+\ell-2}$}] {};
        \node (n7) [rotate=180, left of=n4, label={[shift={(0.41, 0.55)}]$q^{\ell+1}$}] {};
        \draw[->] (n5) ++(0.1, 0.25) arc (170:130:1);
        \draw[->] (n5) ++(-0.05, 0.15) arc (270:30:0.6);
        \draw[->] (n6) ++(-0.09, 0.03) arc (230:190:1);
        \draw[->] (n4) ++(0.35, 0.1) arc (110:70:1);
        \node (nl2) [above of=n5,
            label={[shift={(0.05, -0.7)}]\scalebox{0.7}{\color{red} $q^{\ell-2}$}}] {};
        \node (nl3) [above of=n5,
            label={[shift={(-1.05, -0.5)}]\scalebox{0.7}{\color{red} $q^{L+\ell-2}$}}] {};
        \node (nl4) [above of=n7,
            label={[shift={(-0.29, -0.99)}]\scalebox{0.7}{\color{red} $q^{\ell-1}$}}] {};
        \node (nl5) [above of=n6,
            label={[shift={(-0.69, -1.20)}]\scalebox{0.7}{\color{red} $q^{L+\ell-3}$}}] {};
    \end{scope}

    \draw[dashed, ->] (n3) -- (n5);
    \draw[dashed, ->] (n7) ++(0.35, -0.05) arc (50:-75:1.05);
\end{tikzpicture}
}
\caption{The arrow $q^r \stackrel{\textcolor{red}{q^{r-1}}}{\longrightarrow} q^{r+1}$ represents 
	the state transition $q^{r+1}  = \tau ( q^{r},   \mbset   \sigma(q^{r}), \sigma(q^{r-1})  \meset  )$}
\label{fig:mem}
\end{subfigure}
\hspace{2em}
\begin{subfigure}[b]{0.35\textwidth}
\centering
\resizebox{\linewidth}{!}{

\begin{tikzpicture}[node distance=0.4cm]
    \node[circle,draw,minimum size=0.4cm,inner sep=-0.1] (n1) at (135:1) {\scalebox{0.6}{\tiny $L-2$}};
    \node[circle,draw,minimum size=0.4cm,inner sep=-0.1] (n2) at (90:1) {\scalebox{0.6}{\tiny $L-1$}};
    \node[circle,draw,minimum size=0.4cm,inner sep=-0.1] (n3) at (45:1) {\tiny $0$};
    \node[circle,draw,minimum size=0.4cm,inner sep=-0.1] (n4) at (0:1) {\tiny $1$};
    \node[circle,draw,minimum size=0.4cm,inner sep=-0.1] (n5) at (-45:1) {\tiny $2$};

    \node[below right of=n1] (qm1) {\scalebox{0.7}{\tiny \color{teal} $q^{\ell+L-2}$}};
    \node[above of=n2] (qm2) {\scalebox{0.7}{\tiny \color{teal} $q^{\ell+L-1}$}};
    \node[right of=n3] (qm3) {\scalebox{0.7}{\tiny \color{teal} $q^{\ell}$}};
    \node[right of=n4] (qm4) {\scalebox{0.7}{\tiny \color{teal} $q^{\ell+1}$}};
    \node[right of=n5] (qm5) {\scalebox{0.7}{\tiny \color{teal} $q^{\ell+2}$}};

    \draw[->] (n1) ++(0.16,0.14) arc (125:103:1);
    \draw[->] (n2) ++(0.20,0.00) arc (80:57:1);
    \draw[->] (n3) ++(0.12,-0.16) arc (35:13:1);
    \draw[->] (n4) ++(0,-0.2) arc (-10:-33:1);
    \draw[dashed, ->] (n5) ++(-0.13,-0.13) arc (-55:-213:1);
\end{tikzpicture}
}
\caption{Oriented ring of size $L$ and initial states in green.
}
\label{fig:memtime}
\end{subfigure}
\caption{State graph and initial state for the directed ring of size $L$.}\label{fig:twotrees}
\end{figure}

\subsection{Impossibility with bidirectional links}

We now show that the above impossibility result still holds when we restrict ourselves to the class 
	of symmetric networks~$\Cstbi$.

\begin{thm} \label{thm:lowerbounds}
    For any period $P > 1$, 
    no self-stabilizing and bounded-memory algorithm solves the mod~$P$-synchronization problem 
    in the class of static strongly connected networks with bidirectional links.
\end{thm}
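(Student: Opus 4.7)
The plan is to mirror the proof of Theorem~\ref{thm:no_self_stab}, but working with a third-order state recursion that reflects the fact that every active node in a symmetric ring receives three messages per round (one from itself and one from each of its two neighbors) instead of the two messages of the directed case. Assuming for contradiction a self-stabilizing, bounded-memory algorithm \textsc{alg} with finite state set $\mathcal{Q}$, I would start from an arbitrary triple $q^0, q^1, q^2 \in \mathcal{Q}$ and define
\begin{equation*}
q^{r+2} \eqdef \tau\!\left(q^r, \mbset \sigma(q^{r-1}), \sigma(q^r), \sigma(q^{r+1}) \meset\right), \qquad r \geq 1.
\end{equation*}
Since each triple $(q^{r-1}, q^r, q^{r+1})$ lives in the finite set $\mathcal{Q}^3$, this sequence is eventually periodic: there exist $\ell \geq 1$ and $L \geq 1$ with $q^{r+L} = q^r$ for every $r \geq \ell-1$, and $L$ can be replaced by a multiple so as to guarantee $L \geq 3$.

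The key step is then to execute \textsc{alg} on the symmetric ring of size $L$ (each node equipped with a self-loop and bidirectional edges to its two neighbors), with initial states $s_i(0) \eqdef q^{i+\ell}$ for $i \in [L]$, and to prove by induction on $t$ that
\begin{equation*}
\forall i \in [L], \qquad s_i(t) = q^{i+\ell+2t}.
\end{equation*}
The base case is immediate; for the inductive step, periodicity of $(q^r)$ beyond index $\ell-1$ handles the wrap-around at the two ring endpoints, and the defining recursion applied with $r = i+\ell+2t$ gives the rest. Notice that, unlike in the directed case, the spatial pattern now shifts by \emph{two} positions per round rather than one: this shift by two is exactly what keeps $q^{r+2}$ out of the multiset of received messages, and hence what makes the recursion genuinely forward.

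Comparing positions $0$ and $2$ then yields the symmetric analogue of Eq.~(\ref{eq:ringuni}), namely
\begin{equation*}
s_2(t) = q^{\ell+2t+2} = s_0(t+1) \qquad (t \in \mathbb{N}),
\end{equation*}
from which the conclusion is identical to that of Theorem~\ref{thm:no_self_stab}: after a synchronization time $t_s$, mod-$P$ synchronization forces $C_2(t) \equiv_P C_0(t)$ and $C_0(t+1) \equiv_P C_0(t)+1$, while the state equality $s_2(t) = s_0(t+1)$ forces $C_2(t) = C_0(t+1)$, so $1 \equiv_P 0$, contradicting $P > 1$. The only real subtlety I anticipate is choosing the right shift in the recursion: a shift of one would place $q^{r+1}$ on both sides of the defining equation and forfeit the induction, whereas the shift of two preserves explicitness while still producing a usable state collision between two distinct nodes at consecutive rounds; the secondary bookkeeping required to ensure $L \geq 3$ (needed so that positions $0$ and $2$ are distinct) is standard.
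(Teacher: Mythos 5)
Your proposal is correct and takes essentially the same route as the paper: your single third-order recursion is precisely the interleaving of the paper's two coupled sequences $(p^r)$ and $(q^r)$, and both arguments build a traveling-wave configuration on a bidirectional ring that shifts by two positions per round, yielding $s_2(t)=s_0(t+1)$ and the same mod-$P$ clock contradiction. Your remark about passing to a multiple of the period to ensure $L\geq 3$ correctly handles the degenerate small rings, which the paper sidesteps automatically by working on a ring of size $2L$.
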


\begin{proof}
The argument is by contradiction: assume there is a self-stabilizing algorithm~\textsc{alg} with a finite
	set of states ${\cal Q}$  that achieves mod $P$-synchronization in  the class of static 
	strongly connected networks with bidirectional links~$\Cstbi$.
We fix three states $p^0, q^0, q^1\in \mathcal{Q}$, and we inductively define the two sequences 
	$(p^r)_{r \in \mathbb{N}}$ and $(q^r)_{r \in \mathbb{N}}$ by:
	$$ p^{r+1}  \eqdef  \tau \left (\,  p^{r},   \mbset   \sigma(q^{r}), \sigma(p^{r}) , \sigma(q^{r + 1})  \meset  \, \right ) \ \mbox{ and } \ 
		q^{r+1}  \eqdef  \tau \left ( \, q^{r},   \mbset   \sigma(p^{r-1}), \sigma(q^{r}) , \sigma(p^{r })  \meset  \, \right ) . $$

Since $\mathcal{Q} ^2 $ is finite, the sequence $(p^r, q^r)_{r \in \mathbb{N}}$ is ultimately periodic.
Let $\ell$ be a positive integer such that this sequence is periodic from round $\ell - 1$ and let $L$ be its period.

Then we construct the execution of \textsc{alg} in the bidirectional ring of size $2L$
	where nodes start in the following states:
	\begin{equation*}
        s_i(0) \eqdef \left \{ 
        \begin{array}{ll}
        q^{\ell+k } & \mbox{if  } i= 2k \\
        p^{\ell+k} & \mbox{if }  i= 2k +1.
        \end{array} \right. 
         \end{equation*}
A similar inductive argument as above shows that the state of each node $i \in \{0, 1, \cdots, 2L-1 \}$ at the end of 
	round $t $ is equal to
	\begin{equation*}
        s_i(t)  =  \left \{ 
        \begin{array}{ll}
        q^{\ell+k +t } & \mbox{if  } i= 2k \\
        p^{\ell+k +t } & \mbox{if }  i= 2k +1.
        \end{array} \right. 
         \end{equation*}
  In particular, we have $s_2( t ) = s_0(t+1)$.
      
 Since \textsc{alg} achieves mod~$P$-synchronization in the bidirectional ring of size $2L$,
 	there exists some round~$t_s$ such that for all rounds $t \geq t_s$, we have
	\begin{equation*}
      C_2( t ) \equiv_P C_0(t) \mbox{ and } C_0(t+1) \equiv_P C_0(t) + 1,
        \end{equation*}
	which, with  $s_2( t ) = s_0(t+1)$, yields
	$ C_0(t) + 1   \equiv_P C_0(t) $,
	a contradiction.
\end{proof}

\subsection{Some lower bounds for self-stabilizing synchronization}\label{sec:lowerboundsSS}

Theorem~\ref{thm:no_self_stab} proves the claim in the title for self-stabilizing solutions but,
	from the class of executions in directed rings considered in its  proof,  we can obtain lower bounds
	on the number of states and on the runtime of the self-stabilizing algorithms that  achieve synchronization 
	in networks with at most $n$ nodes, yielding the following refinement of its statement.
	
\begin{thm}\label{thm:lowerboundSS}
Any self-stabilizing algorithm that achieves synchronization in the class ${\cal C}^*_{\leq n}$
	of static strongly connected networks with at most $n$ nodes requires each node to have 
	 at least equal to $n+1$ states, and  its synchronization 
	time is at least of $n-2$ rounds, in the worst case.
\end{thm}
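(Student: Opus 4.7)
I would prove both bounds by refining the construction in the proof of Theorem~\ref{thm:no_self_stab}, tracking the size of the ring used to derive the contradiction and the round at which that contradiction arises.

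For the memory bound, suppose by contradiction that \textsc{alg} solves mod~$P$-synchronization on $\Cst_{\leq n}$ with a state set $\mathcal{Q}$ of size at most $n$. For every choice of initial pair $(q^0,q^1) \in \mathcal{Q}^2$, the construction of Theorem~\ref{thm:no_self_stab} produces, via the ultimately periodic sequence $(q^r)$ of Eq.~(\ref{eq:defrecq}), a directed ring of some size $L$ on which the synchronization hypothesis yields the contradiction of Eq.~(\ref{eq:ringuni}). This ring lies in $\Cst_{\leq n}$ as soon as $L \leq n$, so the existence of \textsc{alg} forces $L > n$ for every initial pair, i.e.\ every cycle of the functional graph of the map $F(q,q') \eqdef (q', \tau(q', \mbset \sigma(q'), \sigma(q) \meset))$ on $\mathcal{Q}^2$ has length strictly greater than~$n$. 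The main obstacle is to contradict this: I would exhibit a specific initial pair whose associated $F$-orbit enters a cycle of length at most~$n$, using the synchronization hypothesis to restrict the accessible part of $\mathcal{Q}^2$ (along an $F$-cycle the clock coordinate must strictly increment by $1$ modulo $P$) enough that a pigeonhole argument on $\mathcal{Q}$ closes the cycle within $n$ steps.

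For the runtime bound, I would fix any self-stabilizing algorithm in $\Cst_{\leq n}$ and exhibit an execution on the directed ring of size $n$ that requires at least $n - 2$ rounds to synchronize. In such a ring the state of node $i$ at round $t$ depends only on the initial states of the $t + 1$ nodes $i, i - 1, \ldots, i - t$ taken modulo $n$, so for $t \leq n - 3$ each node's state is insensitive to the initial state of the node just ahead of it in the ring. Comparing two executions whose initial configurations agree everywhere except at one carefully chosen node, the indistinguishability of some node's state at any round $t \leq n - 3$ would force the common synchronized clock value reached at that round, if any, to be independent of the varied initial state. I would then choose pairs of initial configurations forcing two different required clock values at round $t \leq n - 3$, which rules out synchronization before round $n - 2$ and yields the claimed lower bound on the runtime.
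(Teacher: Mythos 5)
Your plan follows the paper's skeleton only loosely, and both halves stop short of a proof. For the $n+1$ state bound, your reduction halts exactly where the work lies: what the ring construction of Theorem~\ref{thm:no_self_stab} gives you is that every cycle of the map $F(q,q')\eqdef \bigl(q',\tau(q',\mbset \sigma(q'),\sigma(q)\meset)\bigr)$ on $\mathcal{Q}^2$ must have length greater than $n$; since such cycles can a priori have length up to $|\mathcal{Q}|^2$, this alone yields only $|\mathcal{Q}|>\sqrt{n}$, not $|\mathcal{Q}|\geq n+1$. You rightly flag this as the ``main obstacle'', but the mechanism you propose to overcome it is circular: the property that ``along an $F$-cycle the clock coordinate must increment by one modulo $P$'' is extracted in Theorem~\ref{thm:no_self_stab} precisely by running \textsc{alg} on the ring whose size equals the cycle length, and that ring belongs to $\Cst_{\leq n}$ only when the cycle length is at most $n$ --- the very fact in question. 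For a cycle of length greater than $n$ you exhibit no execution inside the class that traverses it, so the synchronization hypothesis imposes no constraint on its states and your pigeonhole step has nothing to act on. By contrast, the paper's proof disposes of this point by taking the eventual period $L$ of the sequence $(q^r)$ of Eq.~(\ref{eq:defrecq}) to be at most $|\mathcal{Q}|\leq n$; your proposal neither adopts that step nor replaces it with an argument, so the first claim is not established.

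For the $n-2$ runtime bound you take a genuinely different route: the paper does not compare perturbed executions at all, but runs the $n$-ring from the initial states $q^{\ell},\dots,q^{\ell+n-1}$ and reads off the equality $s_{n-2}(n-2)=s_{n-1}(n-3)$, which is incompatible with synchronization by round $n-3$. Your propagation observation is correct (in the directed $n$-ring, $s_i(t)$ depends only on the initial states of nodes $i-t,\dots,i$, so a single-node change leaves some node's clock untouched through round $n-2$), and it does show that two executions differing in one initial state, if both synchronized by round $n-3$, must exhibit the \emph{same} clock value at that round. But that is an equality, and your closing step --- ``choose pairs of initial configurations forcing two different required clock values'' --- is unsupported: nothing in the specification ties the synchronized value to the initial configuration, so no choice of configurations ``requires'' prescribed values. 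The gap is fillable: chain single-node changes to conclude that the round-$(n-3)$ value is one common residue $v$ for all initial configurations of the $n$-ring, then re-inject the configuration reached at round $1$ of some execution as a fresh initial configuration (legal by self-stabilization, and producing the same execution shifted by one round), whose round-$(n-3)$ value is then $v+1$ modulo $P$, a contradiction. That time-shift/self-feeding idea --- or the paper's shifted-sequence initial condition, which plays the same role --- is the decisive ingredient, and it is absent from your sketch.
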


\begin{proof}
Let $n$ be a positive integer~$n$, and assume that  an algorithm~\textsc{alg},
	with a set of states ${\cal Q} $, achieves mod $P$-synchronization  in a self-stabilizing way 
	in the network class ${\cal C}^*_{\leq n}$.
As  in the proof of Theorem~\ref{thm:no_self_stab}, we fix a pair of states $q^0, q^1$ and define the  sequence
	$(q^r)_{r \in \mathbb{N}}$, which is ultimately periodic.
Let~$\ell-1$ be the smallest index from which the sequence is $L$-periodic. 

For the first claim, assume for contradiction that the cardinality of $ {\cal Q} $ is fewer or equal $n$.
Then, $L \leq n$ and the directed ring of size $L$ is in the network class ${\cal C}^*_{\leq n}$.
We have proved above that if the successive nodes  in this ring start with the initial states
	$q^{\ell}, \cdots, q^{\ell +L-1}$, then they never synchronize, yielding a contradiction.
Hence, we have $|{\cal Q}| \geq n +1$.

For the second claim, let us consider the same type of execution of~\textsc{alg}, but in the directed ring of size $n$,
	where every node $i \in [n] $ starts in the state $q^{i+\ell}$.
By an easy induction, we prove that for every $t\in \{1,\cdots,n-1\}$, this execution is indistinguishable from that in 
	the ring of size $L$, at least from the viewpoint of each of the nodes $t, t+1, \cdots, n-1$.
We thus obtain
	$$ \forall t \in \{1,\cdots,n-1\}, \ s_t(t) = q^{\ell + 2t}, \ s_{t+1} (t) = q^{\ell + 2t+1},\  \cdots, \ s_{n-1} (t) = q^{\ell + t + n-1} .$$
In particular, we have
	$$\left \{ \begin{array}{lll}
	s_{n-2}(n-3) = q^{\ell + 2n-5}  & \mbox{and} & \ s_{n-1}(n-3) = q^{\ell + 2n- 4} \\
	s_{n-2}(n-2) = q^{\ell + 2n-4}  &  \mbox{and} & \   s_{n-1}(n-2) = q^{\ell + 2n- 3} ,
	\end{array}\right.$$	
	which implies that 
	\begin{equation}\label{eq:ring_n}	
	 C_{n-1}(n-3) \equiv_{P}  C_{n-2}(n-2) .
	 \end{equation}
If synchronization is achieved by round $n-3$, then it holds that $ C_{n-1}(n-3) \equiv_{P}  C_{n-2}(n-3) $,
	$ C_{n-1}(n-2) \equiv_{P}  C_{n-2}(n-2) $, and $ C_{n-1}(n-2) \equiv_{P}  C_{n-1}(n-3) +1 $,
	which contradicts Eq.~(\ref{eq:ring_n}) and completes the proof.
\end{proof}

These lower bounds can be interestingly compared with  the space and time complexities of the self-stabilizing
	algorithms proposed in the literature, which use \emph{unbounded but  finite} memory.
In the case where no bound on the network size is available, only one self-stabilizing algorithm to our knowledge, 
	namely the SAP algorithm~\cite{CBPM:opodis:22} with self-adaptive periods, has been proposed.
This algorithm is parametrized with a non-decreasing and inflationary function $g : \IN \rightarrow \IN$, which can be
	tuned to favor either time or space complexity: the faster $g$ grows, the lower the synchronization time is, 
	and the higher its space complexity is.
Table~\ref{tab:self_stab}  provides the synchronization time and the number of bits used by each node in the SAP 
	algorithm in two typical cases, namely  $g = \lambda x.x+1$ and  $g = \lambda x.2x$. 
The results in terms of memory usage do not take into account the size of the initial values, since the latter is arbitrary
	for self-stabilizing algorithms without any additional information (typically, a bound on the network size).
Actually, the SAP algorithm  also works in the class ${\cal C}$ of dynamic networks with a finite 
	\emph{dynamic diameter} 	(see~Section~\ref{sec:asynch_starts} for a definition).
Its time and space complexity are then identical to those obtained in the case of a static strongly connected 
	network, when the dynamic diameter is substituted for the network size.	
	
When a bound $N$ on the network size  is given, i.e., in the class  $\Cst_{\,\leq N}$ of
	strongly connected static networks with at most $N$ nodes,  a judicious choice of the function~$g$, namely 
	$g = \lambda x. \left \lceil \frac{2N}{P}\right\rceil$, provides a linear synchronization time
	for the SAP algorithm, which does not depend on the bound~$N$.
Moreover, the number of states used by each node is $ P \left\lceil 2N/P\right \rceil  $. 
This demonstrates that the runtime bound in Theorem~\ref{thm:lowerboundSS} is asymptotically tight.
The asymptotic tightness also holds for space complexity when  the upper bound $N$ on the number of 
	nodes $n$  is within a constant multiplicative factor of $n$.
In the case the period $P$ is greater than~4, another self-stabilizing algorithm for the mod-$P$ synchronization
	problem has been proposed 	by Feldmann et al.~\cite{FelKS:spaa:20} for  the restricted class of 
	static symmetric networks~$ \Cbi^* \cap {\mathcal{C}}^*_{\,\leq N} $.
Its runtime is also linear in the number of nodes and its space complexity is of the same order than 
	SAP$_{\lambda x. \left \lceil \frac{2N}{P}\right\rceil}$.
This algorithm is remarkable in that it works in the \emph{Beeping model}, but its correctness highly 
	relies on the assumptions of diffusive starts and symmetric communication links.
	
\begin{table}[h]
\begin{tabular}{|p{3cm}|p{3.4cm}|p{2.7cm}|p{2cm}|}
	\hline
    & & & \\[-0.3cm]
                     &  synchronization time &  state number  & validity \\[0.14cm]
	\hline
    & & & \\[-0.3cm]
	
	SAP$_{\lambda x.x + 1}$      &   $ O(n^2) $ & $ O( n^4) $ & $ \mathcal{C}^* $ \\[0.14cm]
	\hline
    & & & \\[-0.3cm]
	SAP$_{\lambda x. 2x }$ & $ O(n\log n) $ &  $O(n!) $  &  $ \mathcal{C}^* $  \\[0.14cm]
	\hline 
    & & & \\[-0.3cm]
	Algorithm 3 in~\cite{FelKS:spaa:20} & $5 (n-1)$ & $   50 P^2 N $ & $ \Cbi^* \cap {\mathcal{C}}^*_{\,\leq N} $ \\[0.14cm]
	\hline
    & & & \\[-0.3cm]
	SAP$_{ \lambda x. \left \lceil \frac{2N}{P} \right \rceil}$  & $3 (n-1) $ & $ P \left\lceil 2N/P\right \rceil $  & ${\mathcal{C}}^*_{\,\leq N}$  \\[0.14cm]
	\hline
\end{tabular}\vspace{0.2cm}
	\caption{Complexity bounds of self-stabilizing synchronization algorithms 
	in static networks.}
\label{tab:self_stab}
\end{table}

\section{Clock synchronization in  dynamic networks }\label{sec:asynch_starts}

In this section, we demonstrate that synchronization is impossible with bounded memory in the 
	dynamic setting.
	
We first recall the definition of \emph{dynamic diameter} which extends that of diameter of a directed 
	graph.
For that, we introduce the \emph{product} of two directed graphs $G_1 = ([n], E_1)$ and  $G_2 = ([n], E_2)$, 
	denoted by $G_1 \circ G_2$, that is a directed graph defined as follows:
	$G_1 \circ G_2$ has $[n]$ as its set of vertices, and $(i,j)$ is an arc if there exists $k \in [n]$ such that 
	$(i,k) \in E_1$ and $(k,j) \in E_2$.
For any dynamic graph $\mathbb{G}$ and any integer $t' > t \geq 1$, we let 
\begin{equation*}
	\mathbb{G}(t:t') \eqdef \mathbb{G}(t) \circ \mathbb{G}(t+1) \circ \dots \circ \mathbb{G}(t').
\end{equation*}
By extension, we let $\mathbb{G}(t:t) = \dG(t)$.
The \emph{dynamic diameter} of~$\dG$ is then defined as: 
\begin{equation*}
	 D(\dG) \eqdef \inf~\{d > 0 \, \mid \, \forall i,j \in [n], \forall t \in \mathbb{N},~(i,j)~\text{is an arc of}	
	 ~\mathbb{G}(t+1:t+d)\}.
\end{equation*}
By defining the notion of \emph{temporal path} between two nodes $i$ and $j$ as a sequence of 
	vertices $i = k_0, k_1, \dots k_\ell = j$ such that, for some non-negative integer $t$, the arcs 
	$(k_0,k_1), (k_1,k_2), \dots, (k_{\ell-1},k_\ell)$ belong to the directed graphs 
	$\mathbb{G}(t+1), \dots, \mathbb{G}(t+\ell)$, respectively, the dynamic diameter  is then the length 
	of the longest shortest temporal path between any pair of nodes.
In the static case, where all the directed graphs $\dG(t)$ are equal to some directed graph $G$, the 
	dynamic diameter of $\dG$ coincides with $G$'s diameter.
	
\subsection{Impossibility result }

\begin{thm} \label{thm:no_dynamic_algo}
For any period $P > 1$, no algorithm solves the mod~$P$-synchronization problem using bounded memory 
	in the class of networks with a finite dynamic diameter, and even in the case of diffusive starts.
\end{thm}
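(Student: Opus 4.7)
The plan is to adapt the argument of Theorem~\ref{thm:no_self_stab} to the dynamic setting. The core obstacle is that nodes must now start in $\mathcal{Q}_0$ rather than in arbitrary states; we therefore need to \emph{produce} the bad ring configuration of that proof via a legal diffusive-starts execution of \textsc{alg}, while keeping the dynamic diameter finite.

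I will fix $q_0 \in \mathcal{Q}_0$ and take $q^0 \eqdef q_0$ and $q^1 \eqdef \tau(q_0, \mbset \sigma(q_0) \meset)$: these are the states reached after $0$ and $1$ round by an isolated active node and are therefore reachable by \textsc{alg}. The same two-step recurrence $q^{r+2} \eqdef \tau(q^{r+1}, \mbset \sigma(q^{r+1}), \sigma(q^r) \meset)$ from the proof of Theorem~\ref{thm:no_self_stab} then produces a sequence that is ultimately periodic, say $q^{r+L} = q^r$ for all $r \geq \ell - 1$; moreover each $q^r$ is itself reachable, since it is obtained from the reachable state $q^{r+1}$ by applying $\tau$ with a message $\sigma(q^r)$ sent by some reachable state.

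The heart of the proof is then the construction of a dynamic graph $\dG$ on $L$ nodes, with finite dynamic diameter and subject to diffusive starts, whose execution of \textsc{alg} reaches at some round $T$ the configuration $(s_0(T), s_1(T), \dots, s_{L-1}(T)) = (q^\ell, q^{\ell+1}, \dots, q^{\ell+L-1})$, and whose restriction to rounds $t \geq T$ is the static directed ring $v_0 \to v_1 \to \dots \to v_{L-1} \to v_0$ with self-loops. The warm-up phase of $\dG$ is built recursively along $(q^r)$: the first node is activated and evolves through the prefix of the sequence, then plays the role of ``source of $\sigma$-messages'' in order to drive, one after the other, its successors along the appropriate portion of the sequence. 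The dynamic diameter is kept bounded by interleaving these local sub-executions with short, periodic communication rounds that do not perturb the targeted trajectories.

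Once the bad configuration has been installed at round $T$, the inductive argument of the proof of Theorem~\ref{thm:no_self_stab} applies verbatim to the static directed ring that follows, giving $s_i(T+t) = q^{\ell + i + t}$ for all $t \geq 0$, and in particular $s_1(T+t) = s_0(T+t+1)$. Together with the mod~$P$ synchronization invariant, this yields $C_0(T+t) + 1 \equiv_P C_0(T+t)$, the desired contradiction. I expect the warm-up construction to be the delicate part, because of the tension between the intricate message-routing needed to seed $L$ prescribed states simultaneously and the requirement that every window of a bounded number of rounds allow full temporal connectivity.
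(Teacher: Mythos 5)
Your plan hinges entirely on the ``warm-up'' lemma -- that for \emph{every} hypothetical algorithm one can build a legal diffusive-starts execution which, at some round $T$, places the $L$ nodes simultaneously in the states $q^{\ell},q^{\ell+1},\dots,q^{\ell+L-1}$ of the second-order sequence and thereafter runs the static ring -- and this is precisely the step you do not prove, and the step that does not go through as sketched. To drive a node one step along the recurrence $q^{r+2}=\tau\left(q^{r+1},\mbset \sigma(q^{r+1}),\sigma(q^{r})\meset\right)$ you must deliver to it, in that very round, a message from a node that is currently in state $q^{r}$, i.e.\ from a node lagging by exactly one step on the same sequence; that feeder needs its own feeder lagging one further step, and so on. With diffusive starts all nodes begin in the same state $q^0$ and an isolated active node follows the \emph{first-order} map $x\mapsto\tau(x,\mbset\sigma(x)\meset)$, which already diverges from $(q^r)$ at the second step, so the chain of laggers cannot be terminated: the only way to close the regress is a cycle on which a full period of the sequence is already realized -- which is exactly the configuration you are trying to install, so the argument is circular. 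For the same reason your reachability remark is not sound: reaching $q^{r+2}$ requires two nodes to be in states $q^{r+1}$ and $q^{r}$ \emph{at the same round of the same execution}, which is a joint-reachability claim, not a consequence of each state being reachable separately. Finally, the suggestion to interleave ``short, periodic communication rounds that do not perturb the targeted trajectories'' is not available in general: any extra incoming message changes the multiset fed to $\tau$ and hence, for an arbitrary algorithm, changes the transition.

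The paper's proof takes a genuinely different route precisely to avoid this obstacle. It never attempts to reach the ring configuration of Theorem~\ref{thm:no_self_stab}. Instead it works with the first-order recurrence of isolated nodes, uses $L=|\mathcal{Q}|^{!}$ together with Lemma~\ref{lem:ultimperiod} so that all such trajectories share the common transient length and period $L$, and builds a forever-dynamic, $4L$-periodic graph on $2L$ nodes, all initialized to the same $q_0^0\in\mathcal{Q}_0$ with staggered activations: alternately, each node of one group receives at one designated round the messages of the entire other group, and is isolated otherwise, so the only non-trivial transitions are exactly the ones the construction prescribes (rule (b) defining $q_{k}^{0}$). Lemma~\ref{lem:states} then yields $s_0(4kL)=q_{2k}^{2L-1}=s_1(4kL+1)$ for all $k$, and the contradiction with mod~$P$-synchronization follows at infinitely many rounds. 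If you want to salvage your approach you would have to prove the warm-up reachability claim for an arbitrary bounded-memory algorithm, and the regress described above is a concrete reason to expect that this cannot be done by driving nodes one after the other along the second-order sequence.
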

\begin{proof}
The argument is by contradiction: assume there is an algorithm
	\textsc{alg}$= (\mathcal{Q}, \mathcal{Q}_0, \mathcal{M}, \sigma, \tau )$ with a finite
	set of states ${\cal Q}$  that achieves mod P-synchronization in  the class of networks 
	with a finite dynamic diameter and that  tolerates diffusive start schedule.
	
We start the proof by  introducing some notation: 
	for every pair $(a,b)\in \IN\times\IN_{>0}$, let $\rem{a}{b}$ denote the remainder of the euclidean
	 division of $a$ by $b$, i.e.,  $\rem{a}{b} \eqdef a -   \lfloor a /b \rfloor b$.
Moreover, $a^!$ will stand for the least common multiple of all the positive integers less or
	equal to $a$:
	\begin{equation*}
	a^! \eqdef \lcm (1, 2, \cdots, a) .
	\end{equation*}

First we give  a simple general lemma on sequences defined by a first order recurrence in a finite set.
\begin{lem}\label{lem:ultimperiod}
Every  sequence defined by a first order recurrence  in a finite set $X$ 
	is $|X |^{!}\!$-periodic  from the index $|X |^{!}\! -1$.
\end{lem}
\begin{proof}
Let $(x_t)_{t\in\IN}$  be a sequence defined by a first order recurrence  in $X$. 
Since  $X$ is finite, the sequence is ultimately periodic.
That is to say,  there exist two positive integers $k_0$ and $K$ both less or equal to $|X|$ such that 
	\begin{equation*}\label{eq:ultimperiod}
	\forall t \geq k_0 -1, \ \ x_{t+K} = x_t .
	\end{equation*}
Observe that the above property with the integers $k_0$ and $K$ also  holds for any integer 
	$k\geq k_0$ and for any multiple $pK$ of $K$.
By definition,  $ |X |^{!} \geq |X| $ and $|X |^{!}$ is a multiple of $K$, and the lemma follows.	
\end{proof} 

 We now consider a networked system with  the set of nodes $0,\cdots,2L-1$, where 
 	$L= |{\cal Q}|^!$, and fix an initial state $q^0_0 \in \mathcal{Q}_0$.
We construct an execution of 
	\textsc{alg} in this system as follows.
The initial state of all nodes is $q_0^0$, and the edges of  the communication graph at round~$t$ 
	are all the edges $(i,j)\in [2L]^2$  such that 
	$$t \geq 2 + \max ( \rem{i}{L} ,\rem{j}{L})$$ 
	and one of the following condition is satisfied:
	\begin{itemize}
	\item $i=j$; or
	\item  $ \lfloor (t - 1)/L \rfloor \equiv_4 0  $ and $ i \leq L -1$ and $ j = L + \rem{t -1}{L} $; or
	\item $ \lfloor (t -1) /L \rfloor \equiv_4 2 $  and $  i \geq L $ and $j =\rem{t-1}{L}$.
	\end{itemize}
That defines a dynamic graph~$\dG$ which corresponds to the start schedule where each node~$i $ starts in
	round $ t_i =\rem{i}{L} + 2 $, and thus follows the model of diffusive starts.
Moreover,  all the nodes are isolated, i.e., receive no external message, in each period
	from round~$(2k-1)L +1$ to round~$2kL$.
For the rest of the time, nodes are gathered into two groups, namely the groups $\{0, \cdots, L-1\}$
	and $\{L, \cdots, 2 L-1\}$.
In the period $[4kL+1,(4k+1)L]$, alternatively, each node of the second group receives messages from 
	all the nodes in the first group while, in the period $[(4k +2) L+1,(4k+3)L]$, alternatively, each node 
	of the first group receives messages from all the nodes in the second group. 
Clearly, the dynamic graph~$\dG$ is  $4L$-periodic.
Its dynamic diameter is finite and less than or equal to $6L$ (see Figure~\ref{fig:bipartite}).
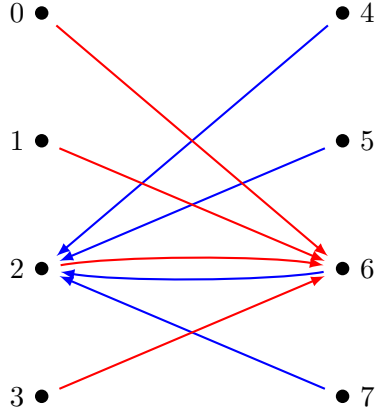
\begin{figure}
\centering
\begin{tikzpicture}
    % Set the style for the nodes
    \tikzset{mynode/.style={circle, fill=black, inner sep=-1.8}}

    % Draw the nodes and labels for the first group
    \foreach \i in {0,1,2,3} {
        \node[circle, fill=white, inner sep=-5] (A\i) at (0, -1.7 * \i) {};
        \node[mynode, label=left:\i] at (0, -1.7 * \i) {};
    }

    % Draw the nodes and labels for the second group
    \foreach \i in {4,5,6,7} {
        \node[circle, fill=white, inner sep=-5] (B\i) at (4, - 1.7 * \i + 1.7 * 4) {};
        \node[mynode, label=right:\i] at (4, - 1.7 * \i + 1.7 * 4) {};
    }

    % Draw the edges between the two groups
    \foreach \i in {2} {
        \foreach \j in {4,5,7} {
            \draw[latex-,thick,blue] (A\i) -- (B\j);
        }
        
    }
    \foreach \i in {0,1,3} {
        \foreach \j in {6} {
            \draw[-latex,thick,red] (A\i) -- (B\j);
        }
    }
	\draw[-latex,thick,red] (A2) .. controls +(10:10mm) and +(170:10mm) .. (B6);
	\draw[-latex,thick,blue] (B6) .. controls +(-170:10mm) and +(-10:10mm) .. (A2);

\end{tikzpicture}
\caption{Red and blue arrows represent the edges in $\dG(4kL+3)$ and $\dG(4kL+2L+3)$ , respectively, 
	when $L=4$.}
\label{fig:bipartite} % FAIRE UNE FIGURE PLUS GENERALE (avec pointilles).
\end{figure}

% correction_louis : j'ai ajouté une phrase pour faire la transition avec ce qui précède
% correction_bernadette : j'ai re-rectifié cette phrase car il y avait un truc qui ne me semblait pas très logique
Given the state $q_0^0 \in \mathcal{Q}_0$ that was fixed earlier, we now define a family of states 
	$(q_k^r)_{k,r \in \mathbb{N}}$ as follows.
A node, starting in the state $q_0^0$ and receiving no external message, updates its state, according 
	to a certain sequence that can be denoted $q_0^1, q_0^2, \dots$.
This sequence is ultimately periodic and is represented by a cherry-shaped subgraph in Figure~\ref{fig:states};
	an isolated node starting in state $q_0^0$ remains trapped in this cycle.
We also define a sequence of states $q_1^0, q_2^0, \dots$ the definition of which is given below.
Each state $q^0_k$  is the starting point of an ultimately periodic sequence, similar to the one starting at $q_0^0$.
By Lemma~\ref{lem:ultimperiod}, the integer~$L$ can be used as the common length of all ``cherry tails'' and 
	``cherry bodies''; this is referred to as  the ``normalized form'' of these sequences in the set~${\cal Q}$.

\begin{figure}
\begin{tikzpicture}[
  vertex/.style = {shape=circle, draw, fill=black, inner sep=-1.9},
  edge/.style = {->, -Latex},
]

% Nodes
\node[vertex,label=left:{$q_0^4$}] (A0) at (0.4, 0) {};
\node[vertex,label=left:{$q_0^1$}] (B0) at (1.5, 1) {};
\node[vertex,label=right:{$q_0^2$}] (C0) at (2.6, 0) {};
\node[vertex,label=below:{$q_0^3$}] (D0) at (1.5, -1) {};
\node[vertex,label=left:{$q_0^0$}] (E0) at (1.5, 2.5) {};

% Edges
\draw[edge] (A0) -- (B0);
\draw[edge] (B0) -- (C0);
\draw[edge] (C0) -- (D0);
\draw[edge] (D0) -- (A0);
\draw[edge] (E0) -- (B0);

% Nodes
\node[vertex,label=below:{$q_1^4$}] (A1) at (4.5, -0.3) {};
\node[vertex,label=left:{$q_1^2$}] (B1) at (5.5, 1) {};
\node[vertex,label=below:{$q_1^3$}] (C1) at (6.5, -0.3) {};
\node[vertex,label=left:{$q_1^1$}] (D1) at (5.5, 2) {};
\node[vertex,label=left:{$q_1^0$}] (E1) at (5.5, 3) {};

% Edges
\draw[edge] (A1) -- (B1);
\draw[edge] (B1) -- (C1);
\draw[edge] (C1) -- (A1);
\draw[edge] (D1) -- (B1);
\draw[edge] (E1) -- (D1);

% Nodes
\node[vertex,label=left:{$q_2^4$}] (A2) at (8.4, 1) {};
\node[vertex,label=left:{$q_2^1$}] (B2) at (9.5, 2) {};
\node[vertex,label=above:{$q_2^2$}] (C2) at (10.6, 1) {};
\node[vertex,label=below:{$q_2^3$}] (D2) at (9.5, 0) {};
\node[vertex,label=left:{$q_2^0$}] (E2) at (9.5, 3.5) {};

% Edges
\draw[edge] (A2) -- (B2);
\draw[edge] (B2) -- (C2);
\draw[edge] (C2) -- (D2);
\draw[edge] (D2) -- (A2);
\draw[edge] (E2) -- (B2);

% Nodes
\node[vertex,label=below:{$q_3^2$}] (A3) at (12.5, -0.3) {};
\node[vertex,label=above:{$q_3^0$}] (B3) at (13.5, 1) {};
\node[vertex,label=below:{$q_3^1$}] (C3) at (14.5, -0.3) {};

% Edges
\draw[edge] (A3) -- (B3);
\draw[edge] (B3) -- (C3);
\draw[edge] (C3) -- (A3);

% \draw[edge,dashed,red] (C0) -- (E1);
% \draw[edge,dashed,red] (B1) -- (E2);
% \draw[edge,dashed,red] (C2) -- (B3);
% \draw[edge,dashed,red] (A3) -- (E1);

\end{tikzpicture} %REFAIRE LA FIGURE avec 5 etats et L=60
\caption{A sequence $\left( q^r_k \right)_{r\in \IN}$ and its normalized form in a space with 5 states.}
\label{fig:states}
\end{figure}	
	
The point of the execution \textsc{alg} with the dynamic graph $\dG$ and the initial states $q_0^0$
	then lies in the following (cf. Figure~\ref{fig:petri}):	
In each round that is a multiple of $2L$, the nodes in one of the two groups are trapped in the loop generated by the 
	state~$q^0_k$, for some integer $k$, while the other nodes are trapped in the loop generated by the 
	state~$q^0_{k+1}$.
During the next $2L$ rounds, the following phenomenon occurs:
One by one, all the nodes that are trapped in the $q^0_k$-generated loop will follow one of the state transitions represented in red in Figure~\ref{fig:petri}.
They will end up trapped in the loop generated by the state $q^0_{k+2}$.
The blue arcs of Figure~\ref{fig:petri} represent the messages that a node in state~$q_k^{L-1}$ must receive 
	to trigger a state transition to $q_{k+2}^0$.
In the meantime, the other nodes remain in the loop generated by $q^0_{k+1}$.

\begin{figure}
\resizebox{\columnwidth}{!}{%
\begin{tikzpicture}[my place/.style={place, minimum size=1.5em},]

% Nodes0
\node[my place, tokens=1] (A0) at (0.2, -0.8) {};
\node[my place, tokens=1, label=left:{$q_\ell^{L-1} = q_\ell^{2L-1}$}] (B0) at (1.5, 1) {};
\node[my place, tokens=1] (C0) at (2.8, -0.8) {};
\node[my place] (D) at (1.5, 2.5) {};
\node[my place,label=left:{$q^0_\ell$}] (E0) at (1.5, 4) {};

% Edges
\draw[-latex,thick] (A0) -- (B0);
\draw[-latex,thick] (B0) -- (C0);
\draw[-latex,thick,dashdotdotted] (C0) -- (A0);
\draw[-latex,dashdotdotted,thick] (D) -- (B0);
\draw[-latex,thick] (E0) -- (D);

% Nodes
\node[my place, tokens=1] (A1) at (5.2, -0.8) {};
\node[my place, tokens=1] (B1) at (6.5, 1) {};
\node[my place, tokens=1] (C1) at (7.8, -0.8) {};
\node[my place] (D) at (6.5, 2.5) {};
\node[my place,label=left:{$q^0_{\ell+1}$}] (E1) at (6.5, 4) {};

% Edges
\draw[-latex,thick] (A1) -- (B1);
\draw[-latex,thick] (B1) -- (C1);
\draw[-latex,thick,dashdotdotted] (C1) -- (A1);
\draw[-latex,thick,dashdotdotted] (D) -- (B1);
\draw[-latex,thick] (E1) -- (D);

% Nodes
\node[my place] (A2) at (10.2, -0.8) {};
\node[my place] (B2) at (11.5, 1) {};
\node[my place] (C2) at (12.8, -0.8) {};
\node[my place] (D) at (11.5, 2.5) {};
\node[my place,label=right:{$q^0_{\ell+2}$}] (E2) at (11.5, 4) {};

% Edges
\draw[-latex,thick] (A2) -- (B2);
\draw[-latex,thick] (B2) -- (C2);
\draw[-latex,thick,dashdotdotted] (C2) -- (A2);
\draw[-latex,thick,dashdotdotted] (D) -- (B2);
\draw[-latex,thick] (E2) -- (D);

% Nodes
\node[my place] (A3) at (15.2, -0.8) {};
\node[my place] (B3) at (16.5, 1) {};
\node[my place] (C3) at (17.8, -0.8) {};
\node[my place] (D) at (16.5, 2.5) {};
\node[my place,label=right:{$q^0_{\ell+3}$}] (E3) at (16.5, 4) {};

% Edges
\draw[-latex,thick] (A3) -- (B3);
\draw[-latex,thick] (B3) -- (C3);
\draw[-latex,thick,dashdotdotted] (C3) -- (A3);
\draw[-latex,thick,dashdotdotted] (D) -- (B3);
\draw[-latex,thick] (E3) -- (D);

\draw[-latex,thick,red] (B0) .. controls +(up:20mm) and +(left:30mm) .. (E2);
\draw[-latex,thick,red] (B1) .. controls +(up:20mm) and +(left:30mm) .. (E3);

\draw[-latex,thick,dashed,blue] (A1) .. controls +(north west:10mm) and +(right:10mm) .. (B0);
\draw[-latex,thick,dashed,blue] (B1) .. controls +(north west:10mm) and +(right:10mm) .. (B0);
\draw[-latex,thick,dashed,blue] (C1) .. controls +(north west:10mm) and +(right:10mm) .. (B0);

\draw[-latex,thick,dashed,blue] (A2) .. controls +(north west:10mm) and +(right:10mm) .. (B1);
\draw[-latex,thick,dashed,blue] (B2) .. controls +(north west:10mm) and +(right:10mm) .. (B1);
\draw[-latex,thick,dashed,blue] (C2) .. controls +(north west:10mm) and +(right:10mm) .. (B1);

\draw[-latex,thick,dashed,blue] (A3) .. controls +(north west:10mm) and +(right:10mm) .. (B2);
\draw[-latex,thick,dashed,blue] (B3) .. controls +(north west:10mm) and +(right:10mm) .. (B2);
\draw[-latex,thick,dashed,blue] (C3) .. controls +(north west:10mm) and +(right:10mm) .. (B2);

\end{tikzpicture}
}
\caption{Global state reached every round that is a multiple of $2L$ round.
    Each vertex represents a state, and each token represents a node.
%    Black arcs represent the application of the transition function, similarly to Figure~\ref{fig:states}.
    Dotted blue arcs represent the messages that the nodes 0 and $L$  receive,
    while red arcs indicate their state transition in the next round.}
%    Some states may be duplicated.}
\label{fig:petri}
\end{figure}
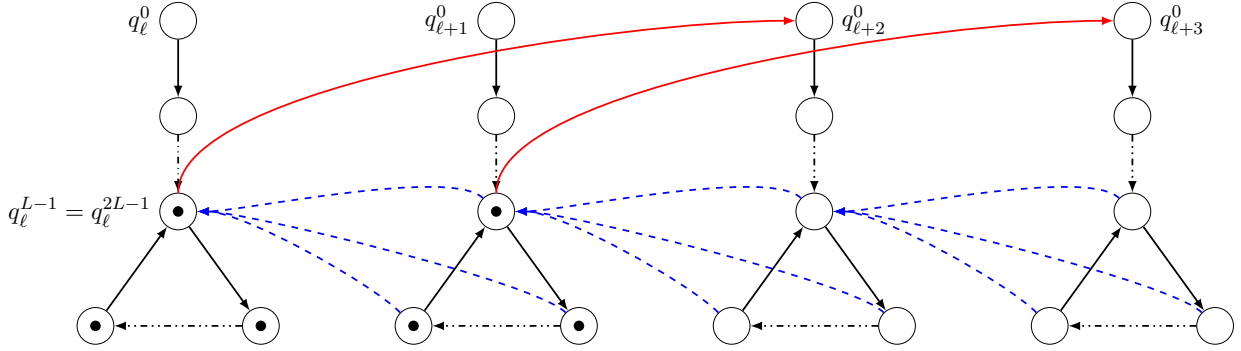

Since \textsc{alg} achieves synchronization modulo $P$, there exists some
	integer $k_0$ such that  the states composing each cherry body 
	$q^{L-1}_k, \cdots, q^{2L-2}_k$ contains all possible output values, from 0 to $P-1$.
Therefore, nodes in the  above defined execution  never synchronize.

We now formalize the above argument.
For that, we prove that the  state of the system in every round of this execution can be computed 
	in terms of the local states $(q_k^r)_{k,r \in \mathbb{N}}$ defined as follows:
 	\begin{equation*}
        q_k^r \eqdef 
        		\begin{cases}
            	q_0^r & \text{if}~k = 1~~~~~~~~~~~~~~~\!~~~~~~~(a) \\
            	\tau \left ( q_{k - 2}^{2L-1}, \bigmbset \sigma(q_{k - 2}^{2L-1}), \sigma(q_{k - 1}^{L}), \sigma(q_{k - 1}^{L+1}),
                	\dots, \sigma(q_{k - 1}^{2L-1}) \meset\right ) & \text{if}~k > 1~\text{and}~r = 0 
	~~~~~~~~(b)\\
            	\tau \left ( q_k^{r-1}, \bigmbset\sigma(q_k^{r-1}) \bigmeset \right ) & \text{if}~r > 0.
	~~~~~~~~~~~~~~~~~~~~~(c)
        		\end{cases}
    \end{equation*}
Observe that every sequence of states $(q_k^r)_{r \in \mathbb{N}}$ is a sequence defined by a first 
	order recurrence, and thus, by Lemma~\ref{lem:ultimperiod} is 
	$|{\cal Q}|^{!}$-periodic  from the index $|{\cal Q}|^{!}\! -1$.

\begin{lem}\label{lem:states}
For every positive integer $k$, the state of the node $i $ at the end of round $2kL$ is equal to    
\begin{equation}\label{eq:localstate}
        s_i(2kL) = q_{k + \lambda}^{2L-1-\rem{i}{L}} ~~~ \text{where} ~~~ \lambda = 
        \begin{cases}
            -1 & \text{if}~i \geq L~\text{xor}~ k \text{ is odd} \\
            0 & \text{otherwise.}
        \end{cases}
    \end{equation}
\end{lem}
\begin{proof}
We proceed by induction on $k\in\IN_{>0}$.
\begin{enumerate}
	\item Base case: $ k = 1$. 
        Every node $i \in [L]$ is in state $q^0_0$ \emph{at the beginning} of round $i+2$ since 
        		$t_i = \rem{i}{L}$.
        Moreover, during the first $2L$ rounds, it receives no message, except its own message
       		from round $t_i$.
	The recurring rule (c) for the definition of the sequence $(q_0^r)_{r \in \mathbb{N}}$ yields
         \begin{equation*}
                s_i(2L) = q_0^{2L-1- \rem{i}{L}} .
         \end{equation*}
            Similarly, for every node $i \in \{L,\cdots, 2L-1\}$, the recurring rules (a) and (c) 
            for defining the sequence $(q_1^r)_{r \in \mathbb{N}}$ lead to
            \begin{equation*}
                s_i(2L) = q_1^{2L-1-\rem{i}{L}}.
            \end{equation*}
            \item Inductive case: we assume that the states of all the nodes at the end of round $2kL$
            	 satisfy Eq.~(\ref{eq:localstate}).
	 By definition of the directed graphs $\dG(2kL)$ and $\dG((2k +1)L)$, the transition from $k$ to $k+1$ 
	 corresponds to a translation of $2L$ rounds, and so consists in switching every node $i \in [L]$ 
	 with the node $i +L$.
	 Without loss of generality, we may thus assume that $k$ is odd.
In this case,  we have 
	\begin{equation*}
	        s_i(2kL) =
        		\begin{cases}
            	q_{k-1}^{2L-1-\rem{i}{L}} & \text{if}~i\in \{0,\cdots,L-1\}   \\
            	 q_{k}^{2L-1-\rem{i}{L} } & \text{if}~i\in \{L,\cdots,2L-1\}  .
	         \end{cases}
    \end{equation*}
In round $2kL+1, 2kL+2, \cdots, 2(k+1)L$, each node $i\in \{L,\cdots, 2L-1\}$  is active since~$k$ is
	positive, and it receives no message except its own message.
Because of the inductive assumption, we have  $ s_i(2kL ) =  q_{k}^{2L- 1 -\rem{i}{L} } $.
By a repeated application of the rule (c), we obtain
	\begin{equation}\label{eq:inducdroite}
	s_i(2kL +1) = q_{k}^{2L-\rem{i}{L} } ,
	\  s_i(2kL +2) = q_{k}^{2L-\rem{i}{L} +1} , \cdots,
	\ s_i(2(k+1)L ) = q_{k}^{4L-1-\rem{i}{L} } .
	\end{equation}	
Since $\rem{i}{L} \leq L-1$, Lemma~\ref{lem:ultimperiod} implies that 
	$ q_{k}^{4L-1-\rem{i}{L} }  = q_{k}^{2L-1-\rem{i}{L} } $.
The last equality  in (\ref{eq:inducdroite}) then proves that Eq.~(\ref{eq:localstate}) holds for  $k+1$ 
	and all the nodes in $\{L,\cdots,2L-1\}$.

In round $2kL+1$, the node 0 receives messages from each of the nodes $L, L+1, \dots, 2L-1$, 
	in addition of its own message.
By the inductive hypothesis, the state of node 0 in round $2kL$ is $q_{k-1}^{2L-1}$, and
	the states of the nodes $L, \dots, 2L-1$ are $q_k^{2L-1}, \dots, q_k^{L}$, respectively.
It follows that
            \begin{equation*}
                s_0(2kL+1) = \tau \left ( q_{k-1}^{2L-1},
                \bigmbset \sigma(q_{k-1}^{2L-1}), \sigma(q_{k}^L), \sigma(q_{k}^{L+1}), \dots, 
                \sigma(q_{k}^{2L-1}) \bigmeset \right )
            \end{equation*}
 	and the recurring rule (b) yields $ s_0(2kL+1) = q_{k+1}^0$.
	
Any node $i \in \{1, \dots, L-1\}$ receives no message in rounds $2kL+1, \cdots, 2kL+ i $.
The inductive hypothesis  and a repeated application of the rule (c) give that its state 
	at the beginning of round $2kL+i+1$ is  $q_{k-1}^{2L-1}$.	
In this round, $i$ receives 	messages from each of the nodes $L, L+1, \dots, 2L-1$, 
	in addition of its own message.
Using Eq.~(\ref{eq:inducdroite}), we obtain
	 \begin{equation*}
                s_i(2kL+i+1) = \tau \left ( q_{k-1}^{2L-1},
                \bigmbset \sigma(q_{k-1}^{2L-1}), \sigma(q_{k}^{2L+i}), \sigma(q_{k}^{2L+i-1}), \dots, 
                \sigma(q_{k}^{L+i+1}) \bigmeset \right ).
            \end{equation*}
Lemma~\ref{lem:ultimperiod} shows that the two multi-sets 
	$\mbset q_{k}^{2L+i} , q_{k}^{2L+i-1}, \dots, q_{k}^{2L}  \meset$ and
	$\mbset  q_{k}^{L}, q_{k}^{L+1}, \dots, q_{k}^{L +i} \meset$  are equal.
Consequently,	$$\forall i \in [L],\ \ s_i(2kL+i+1) = s_0(2kL+1) = q_{k+1}^0 .$$
Later, in all the rounds $2kL+i+2, 2kL+i+3, \cdots, 2(k+1)L$, the node $i$ is again isolated,
	and  a repeated application of the rule (c) yields
	$$s_i(2(k+1)L )= q_{k+1}^{2L-1-i}, $$
	as required.
\end{enumerate}
\end{proof}

As an immediate consequence of Lemma~\ref{lem:states} and its proof, we obtain that, for all positive
	 integers~$k$,  
	 $$ s_0(4kL) = q_{2k}^{2L-1} = s_1(4kL+1), $$
	 and thus 
	$ C_1(4kL+1) \equiv_P  C_0(4kL)    $.
Since \textsc{alg} achieves synchronization modulo $P$, there exists some
	integer $r_0$ such that 
	$$ \forall t \geq r_0, \ \  C_1(t+1) \equiv_P  C_1(t) + 1 \ \mbox{ and } \
			C_1(t) \equiv_P C_0(t)  . $$
Hence,  for $4kL \geq r_0$, we have 
    \begin{equation*}
       C_1(4kL+1) \equiv_P  C_0(4kL) + 1,
    \end{equation*}
    a contradiction with $ C_1(4kL+1) \equiv_P  C_0(4kL)  $.
\end{proof}

Theorem~\ref{thm:no_dynamic_algo} demonstrates that, even in the case of diffusive starts, synchronization is
	impossible with bounded memory in the network class $\Cd$ of dynamic networks with  finite diameters.
This impossibility result is all the more striking given  the synchronization algorithm  proposed by Feldmann et 	
	al. (Algorithm 1 in~\cite{FelKS:spaa:20}) for the Beeping model, symmetric communication links, and 
	diffusive starts which does use bounded memory.
The authors leverage both symmetric communications and diffusive starts to build clocks
	whose absolute growths\footnote{%
	The absolute growth of $i$'s clock at round $t$ corresponds to the value of a virtual counter which represents 
	the absolute value by which $C_i$ has increased until round~$t$.} 
	differ from at most two for any pair of neighboring nodes.
Unfortunately, we do not know how to relax any of the latter  two assumptions.
Actually, we ignore whether synchronization is possible with bounded memory in static networks when
	asynchronous starts are arbitrary or when communications are not symmetric.
All these results and open questions are  summed up in Table~\ref{table:impossibility}.

\medskip
%%%%%%%%%%%%%%%%%%%%%%%%%%%%%%%%%%%%%%%%%%%
\begin{table}[h]%TODO Pourquoi il y a-t-il un `?` dans la case en bas à droite?
\centering
\begin{tabular}{|c|c|c|c|}
\hline
     & static, bidirectional & static & dynamic \\ [0.2cm]\hline

diffusive starts        & \multicolumn{1}{c|}{\checkmark~(\cite{FelKS:spaa:20})}&\multicolumn{1}{c|}{?}& \ding{55}~(Thm.~\ref{thm:no_dynamic_algo}) \\[0.2cm] \hline
asynchronous starts         & \multicolumn{1}{c|}{?}                   & \multicolumn{1}{c|}{?}        & \ding{55}~(Thm.~\ref{thm:no_dynamic_algo}) \\[0.2cm] \hline
self-stabilization      & \ding{55}                  &
    \multicolumn{1}{c|}{\ding{55}~(Thm.~\ref{thm:no_self_stab})} & \ding{55}~(Thm.~\ref{thm:no_self_stab} and \ref{thm:no_dynamic_algo}) \\[0.2cm] \hline
\end{tabular}
\caption{Clock synchronization problem with bounded memory.}
\label{table:impossibility}
\end{table}
%%%%%%%%%%%%%%%%%%%%%%%%%%%%%%%%%%%%%%%%%%%

\subsection{A lower bound for  clock synchronization }

From the  execution built in the  proof of Theorem~\ref{thm:no_dynamic_algo},
	we can obtain a lower bound on the number of states each node uses in  a synchronization algorithm
	that works in the network class ${\cal C}_{\leq n}$.
The key point in the proof is an upper bound on  $k^{!}\!$ when $k$ is a positive integer.

\begin{thm}\label{thm:lb}
Any  algorithm that achieves synchronization in the class 
	of dynamic networks with finite diameter and at most $n$ nodes requires each node to have 
	 at least equal to $  1 +  \left \lfloor \frac{1}{1,11} \log \left( \frac{n}{2} \right ) \right \rfloor $ states.
\end{thm}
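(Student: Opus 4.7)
The plan is to recycle verbatim the execution constructed in the proof of Theorem~\ref{thm:no_dynamic_algo} and simply track how its size depends on $|\mathcal{Q}|$. Recall that the counter-example network built there has $2L$ nodes with $L = |\mathcal{Q}|^!=\lcm(1,2,\dots,|\mathcal{Q}|)$, and that in that execution the algorithm provably fails to synchronize. Hence, if an algorithm with state set $\mathcal{Q}$ synchronizes in the class $\Cd_{\leq n}$ of dynamic networks with finite diameter and at most $n$ nodes, we must have $2|\mathcal{Q}|^! > n$, for otherwise the counter-example would already lie in $\Cd_{\leq n}$.

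The work is therefore entirely to convert the condition $2|\mathcal{Q}|^! > n$ into a lower bound on $|\mathcal{Q}|$ alone. For this I would invoke a classical effective bound on the Chebyshev function $\psi(k)=\log\lcm(1,\dots,k)$. Any explicit bound of the form $\lcm(1,\dots,k)\leq e^{ck}$ with $c<1.11$ suffices; for concreteness, Hanson's inequality $\lcm(1,\dots,k)\leq 3^k$ (equivalently $\psi(k)\leq k\log 3$, with $\log 3\approx 1.0986<1.11$) is elementary and enough. So for every positive integer $k$,
\begin{equation*}
    k^{!} \;\leq\; e^{1.11\,k}.
\end{equation*}

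Now set $k=|\mathcal{Q}|$ and assume for contradiction that $k\leq \lfloor \tfrac{1}{1.11}\log(n/2)\rfloor$, i.e.\ that $1.11\,k\leq \log(n/2)$. Combining this with the displayed inequality gives
\begin{equation*}
    2\,k^{!} \;\leq\; 2\,e^{1.11\,k} \;\leq\; 2\cdot \tfrac{n}{2} \;=\; n,
\end{equation*}
so the $2L$-node dynamic graph of Theorem~\ref{thm:no_dynamic_algo} belongs to $\Cd_{\leq n}$. Since that theorem's proof exhibits an execution on this network in which the algorithm never synchronizes modulo $P$, this contradicts the assumption that \textsc{alg} solves synchronization in $\Cd_{\leq n}$. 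Therefore $|\mathcal{Q}|\geq 1+\lfloor \tfrac{1}{1.11}\log(n/2)\rfloor$, as claimed.

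\textbf{Main obstacle.} There is essentially no combinatorial difficulty left beyond Theorem~\ref{thm:no_dynamic_algo}; the only delicate point is selecting an effective upper bound on $\lcm(1,\dots,k)$ whose constant matches the stated $1.11$. The sharp value from the prime number theorem is $1$, but any explicit bound below $1.11$ (Hanson's $3^k$, or the Rosser–Schoenfeld estimate $\psi(k)<1.04\,k$) is sufficient and keeps the argument elementary; the slight slack in the constant $1.11$ is exactly what absorbs the factor $2$ coming from the $2L$-node construction.
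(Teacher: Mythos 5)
Your proposal is correct and follows essentially the same route as the paper: it reuses the $2L$-node counter-example of Theorem~\ref{thm:no_dynamic_algo} with $L=|\mathcal{Q}|^!$ to force $|\mathcal{Q}|^! > n/2$, and then converts this into the stated bound via an effective estimate $\log k^! \leq 1.11\,k$. The only (minor) difference is the source of that estimate: the paper derives it from $\log k^! = \sum_p \lfloor \log k/\log p\rfloor \log p \leq \pi(k)\log k$ together with Chebyshev's bound $\pi(k)\leq 1.11\,k/\log k$, whereas you cite Hanson's inequality $\lcm(1,\dots,k)\leq 3^k$ (or Rosser--Schoenfeld), which is an equally valid substitute.
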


\begin{proof}
Since $k^{!}\!$ divides $k!$, we have
	$$ \log k^{!}\! \leq \log k! = O(k \log k) .$$
In fact, a much stronger bound on 	$k^{!}\!$ holds:
Namely, as a consequence of the prime number theorem, one may show that 
	\begin{equation}\label{eq:equiv_pn}
	\log k^{!}\! \sim k .
	\end{equation}
Actually, we will content ourself with the following weaker but more explicit variant of Eq.~(\ref{eq:equiv_pn})
	\begin{equation}\label{eq:1.11}
	\log k^{!}\! \leq 1,11 \,  k .
	\end{equation}
To prove Eq.~(\ref{eq:1.11}), we first observe that $\log k^{!}\! $ admits the following expression:
	\begin{equation}\label{eq:logk^!}
	\log k^{!}\!  = \sum_{p } \left \lfloor \frac{\log k}{\log p} \right \rfloor \log p ,
	\end{equation}
	where the sum runs over the prime numbers $p$.
Indeed, for every prime number $p$, the power at which $p$ occurs in $ k^{!}\!  $ is equal to
	$$ \max \{ n \in \IN : p^n \leq k \} = \max  \{ n \in \IN : n \log p  \leq \log k  \} 
	                                                    =  \left \lfloor \frac{\log k}{\log p} \right \rfloor . $$
From Eq.~(\ref{eq:logk^!}), we derive the upper bound
	$$ \log   k^{!}\! \leq \sum_{p \leq k} \frac{\log k}{\log p} \log p \leq \pi(k) \log k , $$
	where  $\pi$ is the prime counting function 
	$\pi(k) \eqdef  | \{ p \mbox{ prime } : p \leq k\} |$.
Finally,   Eq.~(\ref{eq:1.11}) follows from   Chebyshev's estimate~\cite{HarW:book:39} 
	$$ \pi(k) \leq 1,11 \, \frac{k}{\log k} .$$     
An argument similar to that developed for Theorem~\ref{thm:lowerbounds}  and applied to the execution defined 
	in the proof of Theorem~\ref{thm:no_dynamic_algo} shows that the set of states ${\cal Q}$ of any 
	synchronization algorithm  in the network class ${\cal C}_{\leq n}$ satisfies
	$ | {\cal Q} |^{!}\!  > \frac{n}{2} $, and thus
	$$   | {\cal Q} | \geq 1 +  \left \lfloor \frac{1}{1,11} \log \left( \frac{n}{2} \right ) \right \rfloor \, .$$                        
\end{proof}

Unlike the similar lower bound established  for self-stabilizing algorithms in Section~\ref{sec:lowerboundsSS}, 
	we do not know whether the bound in Theorem~\ref{thm:lb} is  tight.
In other words,  we do not know what is the overhead induced by the self-stabilizing property for the problem
	of clock synchronization.

%\bibliographystyle{plain}
%\bibliography{../../Biblio/bibase}
%
%\paragraph{Acknowledgements.} 
%We would like to thank  Sylvain Gay and Vincent Peth for useful discussions and comments. 

\end{document}